\pdfoutput=1
\documentclass[lettersize,journal]{IEEEtran}
\usepackage{amsmath,amsfonts,amssymb}
\usepackage{algorithmic}
\usepackage{algorithm}
\usepackage{array}
\usepackage[caption=false,font=normalsize,labelfont=sf,textfont=sf]{subfig}
\usepackage{textcomp}
\usepackage{url}
\usepackage{graphicx}
\usepackage{cite}
\usepackage{booktabs}
\usepackage{xcolor}
\usepackage{multirow}
\usepackage[hidelinks]{hyperref}
\usepackage{tikz}
\usetikzlibrary{arrows.meta,positioning,shapes.geometric,fit,calc,decorations.pathreplacing}
\usepackage{amsthm}
\newtheorem{theorem}{Theorem}

\newtheorem{definition}{Definition}

\newcommand{\sbpp}{\mbox{\textnormal{\textsc{SBPP}}}}

\newcommand{\LP}{\mathrm{LP}}
\newcommand{\negl}{\mathrm{negl}}

\begin{document}

\title{Search-Bound Proximity Proofs:\\
Binding Encrypted Geographic Search to\\
Zero-Knowledge Verification}

\author{%
Yoshiyuki Ootani%
\thanks{Yoshiyuki Ootani is an independent researcher based in Japan (e-mail: info@ootanl.com).}%
}

\maketitle

\begin{abstract}
Location-based systems that combine encrypted geographic search
with zero-knowledge proximity proofs typically treat the two
phases as independent. Under an honest-but-curious server, this
leaves an \emph{authorization provenance gap}: once session state
is purged, no forensic procedure can attribute a proof to its
originating search session, because the proof's public inputs
encode no session-identifying information.

We formalize this gap as the \emph{search-authorized proof}
(SAP) security notion and show via a concrete audit re-association
attack that proof-external mechanisms (where authorization evidence
remains outside the proof) cannot prevent forensic misattribution
when the same drop parameters recur across sessions.
Search-Bound Proximity Proofs (\sbpp{}) realize the SAP
requirements without modifying the ZKP circuit: session nonce,
Merkle-root result-set commitment, and signed receipt are
decomposed into independently auditable components, enabling
property-level fault isolation in offline audit. Experiments on
synthetic and real-world data (110\,776 OpenStreetMap POIs) show
sub-millisecond absolute overhead on a 125\,ms Groth16 baseline.
\end{abstract}
\begin{center}
\small This work has been submitted to the IEEE for possible publication.
Copyright may be transferred without notice, after which this version may no longer be accessible.
\end{center}

\begin{IEEEkeywords}
searchable encryption, zero-knowledge proofs, authorization
provenance, forensic audit, geographic access control
\end{IEEEkeywords}

\section{Introduction}
\label{sec:intro}

\IEEEPARstart{G}{eo-content} systems---applications that bind digital
content to physical locations---are a growing class of location-based
services (LBS). Examples include location-locked media (``geo-drops''),
geofenced access control, and place-based augmented reality. These
systems share a common two-phase workflow:

\begin{enumerate}
\item \textbf{Discovery.} The client queries the server to find nearby
  content (e.g., ``what drops are within 1\,km of me?'').
\item \textbf{Verification.} The client proves proximity to a specific
  piece of content to unlock it (e.g., via GPS check, zero-knowledge
  proof, or visual verification).
\end{enumerate}

In current deployments, these phases are \emph{decoupled}: discovery
uses plaintext geohash queries that reveal the user's approximate
location to the server, while verification uses a separate
cryptographic mechanism (such as a Groth16 proximity proof) that reveals
nothing about the user's exact coordinates. This creates what we call
the \emph{search-verify gap}---more precisely, an
\emph{authorization provenance gap}: once session state is purged,
no audit record can forensically attribute a proof to its
originating search session, because the proof's public inputs
encode no session-identifying information.

\smallskip
Prior work has addressed parts of this problem in isolation.
Encrypted geographic search systems~\cite{gridse,hier-sse,range-sse}
replace plaintext geospatial queries with encrypted or tokenized search
mechanisms that hide the user's coordinates from the server.
Pairing-based zero-knowledge arguments such as Groth16~\cite{groth16},
when instantiated as a proximity circuit~\cite{zairn2026}, allow the
user to prove they are within a radius~$R$ of a target without revealing
their exact location. However, to our knowledge, prior work has
not studied binding \emph{encrypted geographic discovery} to
\emph{zero-knowledge proximity verification} via the proof
transcript, leaving the authorization provenance gap open.

A natural first attempt is to check a session nonce alongside the
proof at the application layer. We show (\S\ref{sec:eval}) that
this \emph{does not suffice}: because the nonce is not committed
inside the proof, the proof can be separated from its session
context and re-associated with a different session (A1/A2).
Stronger alternatives such as signed authorization tokens or
MAC-based result binding also fall short: because their proofs
remain session-agnostic, an adversary can re-associate a valid
proof with a different session's token, forging a misleading audit
record (\S\ref{sec:security:separation}).
Under an honest-but-curious server, this matters for
\emph{forensic accountability}: in geo-content platforms subject to
compliance audit (e.g., geofenced access control, location-gated
media), post-hoc review must attribute each proof to its
originating authorization. If an adversary swaps a proof's
authorization token, the audit record misattributes the access to
a different session. Log-level correlation is heuristic and
defeatable; only cryptographic commitment in the proof's public
inputs makes misattribution detectable.

We formalize this as the SAP security notion
(\S\ref{sec:security}) and present \sbpp{} as a practical
decomposition: a session nonce (P1), a Merkle-root
digest (P2), and a signed receipt (P3). Any internalized-token
design with equivalent content (V8 in our comparison) achieves
the same security level; \sbpp{}'s advantage is property-level
fault isolation and format-independent audit
(\S\ref{sec:discussion}). \sbpp{} certifies provenance of the \emph{issued} context, not
search correctness (\S\ref{sec:discussion}).

\smallskip\noindent\textbf{Contributions.}

\begin{itemize}
\item[\textbf{C1}] We define the \emph{search-authorized proof}
  (SAP) security notion (P1--P3), identify six attack classes
  including forensic audit misattribution (A4b), and establish
  design requirements for the committed content required by each
  property (\S\ref{sec:problem}, \S\ref{sec:security}).
\item[\textbf{C2}] We exhibit an \emph{audit re-association attack}
  showing that proof-external authorization is vulnerable to P3
  violation whenever the same drop parameters recur across
  sessions---a common scenario within epoch windows
  (\S\ref{sec:security:separation}).
\item[\textbf{C3}] We present \sbpp{}, a deployable realization
  of the SAP requirements that achieves property-level fault
  isolation (each failure mode maps to a distinct component),
  $O(1)$ compact verifier state, and no circuit modification
  (\S\ref{sec:protocol}).
\item[\textbf{C4}] We evaluate on synthetic and real-world (OSM)
  distributions, nine protocol variants, and two mobile devices
  (\S\ref{sec:eval}).
\end{itemize}

\noindent\textbf{Prior artifact and novel contributions.}
The Zairn platform~\cite{zairn2026} provides three pre-existing
components used in this work: (i)~a GridSE-style encrypted search
module, (ii)~a Groth16 proximity-proof circuit (474 constraints),
and (iii)~a proof-level context binding mechanism that embeds
drop ID, policy version, and epoch in the challenge digest to
prevent cross-drop transfer (A3).

\emph{New to this paper:}
the SAP security notion (P1--P3) and design requirements; the
audit re-association attack and proof-external separation; the
\sbpp{} practical decomposition (Core/Full); and the nine-variant
evaluation including OSM real-world data. The prior
artifact~\cite{zairn2026} provides implementation components
(SSE module, Groth16 circuit, context binding); the security
notion, separation argument, and full evaluation are introduced
here.

\section{Background}
\label{sec:background}

\subsection{Searchable Symmetric Encryption}

Searchable symmetric encryption (SSE) enables keyword search over
encrypted data~\cite{song-sse,structured-encryption}. In the geographic
setting, GridSE~\cite{gridse} encodes locations as geohash strings and
generates HMAC-based index tokens at multiple precision levels. A
searcher generates tokens for their location and its neighbors; the
server performs set intersection over opaque tokens without learning
coordinates.

The standard SSE leakage profile includes: (i)~\emph{search pattern}
(whether two queries are identical), (ii)~\emph{access pattern}
(which documents match), and (iii)~\emph{volume pattern}
(how many documents match)~\cite{sse-leakage}. Geographic SSE
inherits all three.

\subsection{Zero-Knowledge Proximity Proofs}

A zero-knowledge proximity proof (ZKPP) allows a prover to convince a
verifier that they are within radius~$R$ of a target location
$(t_\text{lat}, t_\text{lon})$ without revealing their exact
coordinates~\cite{groth16}. In Zairn's implementation~\cite{zairn2026},
the prover generates a Groth16 proof over a Circom circuit (474
constraints on BN128) with public inputs including the target
coordinates, radius, and a \emph{challenge digest}---a hash that
binds the proof to a specific application context (drop ID, policy
version, epoch).

\subsection{Context Binding}

The Zairn platform~\cite{zairn2026} implements context binding for ZKPPs:
embedding application-level identifiers (drop ID, policy version, epoch)
in the proof's public inputs to prevent cross-drop proof transfer. The
challenge digest is computed as:
\begin{equation}
\text{contextDigest} = H(\LP(D, \text{pv}, e))
\label{eq:context}
\end{equation}
where $\LP(\cdot)$ denotes length-prefixed canonical encoding and $H$
is SHA-256 reduced modulo the BN128 scalar field order $q$.

\sbpp{} extends this by adding the search session nonce~$N$ to the
encoding (\S\ref{sec:protocol:binding}). Crucially, the prior
proof-level binding~\cite{zairn2026} addresses only \emph{proof-level} binding
(which drop a proof is for) and does not consider the discovery
phase at all. \sbpp{} addresses \emph{protocol-level} binding
(which search session a proof belongs to), a complementary and
non-overlapping concern.

\section{Problem Statement and Threat Model}
\label{sec:problem}

\subsection{System Model}

We consider a geo-content system with three roles:
\begin{itemize}
\item \textbf{Drop creator}: Publishes location-bound content with
  encrypted index tokens and a ZKP verification key.
\item \textbf{Client} (searcher/prover): Searches for nearby drops
  and generates proximity proofs to unlock them. Holds a shared
  search key $k$ for HMAC token generation (provisioned per
  content-access tier; key distribution is orthogonal to \sbpp{}).
\item \textbf{Server}: Stores encrypted index tokens and drop metadata.
  Performs token matching and proof verification.
\end{itemize}

\subsection{Adversary Model}

We consider two adversary types:

\smallskip\noindent\textbf{Privacy adversary (honest-but-curious
server).} The server follows the protocol correctly but attempts to
learn the client's location from observed messages---search tokens,
access patterns, volume patterns, and proof submission
timing~\cite{sse-leakage}. This is the standard SSE adversary
model~\cite{curtmola-sse,sse-leakage}.

\smallskip\noindent\textbf{Integrity adversary (active client /
network attacker).} An adversary who can intercept, replay,
substitute, or re-associate protocol messages. This adversary
attempts cross-session proof substitution (A1), re-association (A2),
result-set escape (A4), or delayed replay (A5). Cross-drop transfer
(A3) is handled by the underlying context
binding~\cite{zairn2026}.

\smallskip
\sbpp{} primarily targets the integrity adversary under an
HBC server. A \emph{malicious} server can omit candidates, sign
biased roots, or refuse sessions; \sbpp{} does not guarantee
search completeness or fairness (see \S\ref{sec:discussion}).
Per-query SSE leakage is unchanged from GridSE
(\S\ref{sec:security}).

\subsection{The Search-Verify Gap}

\begin{definition}[Authorization Provenance Gap]
A geo-content protocol has an authorization provenance gap if,
after server-side session state is purged, no forensic procedure
can cryptographically attribute a proof to the search session that
authorized it---i.e., the proof's committed public inputs encode
no session-identifying information.
\end{definition}

\subsection{Attack Classes}

The authorization provenance gap enables the following attack classes:

\smallskip\noindent\textbf{A1: Cross-session proof substitution.}
A proof generated in session $S_1$ is submitted in $S_2$. Without
proof-level session binding, the server cannot detect this from the
proof transcript alone.

\smallskip\noindent\textbf{A2: Cross-session re-association.}
An attacker intercepts a legitimate proof and submits it under a
different session.

\smallskip\noindent\textbf{A3: Cross-drop transfer.}
A proof for drop $D_1$ is submitted for $D_2$. Prevented by context
binding~\cite{zairn2026} (drop ID in digest).

\smallskip\noindent\textbf{A4a: Result-set escape (online).}
A proof for a drop $D \notin \mathcal{R}$ is submitted. The
deployed verifier catches this via an explicit membership check
$D \in \mathcal{R}$, but this requires the server to retain
$\mathcal{R}$ in session state.

\smallskip\noindent\textbf{A4b: Result-set escape (audit).}
After server-side $\mathcal{R}$ is purged, can an auditor verify
from the stored audit record that the proof was authorized for a
drop in the result set? Without a committed root in $\text{pub}$
and authenticated openings, no.

\smallskip\noindent\textbf{A5: Delayed proof replay.}
A client caches search results, waits, then submits a proof. Session
TTL and consumption mitigate this.

\section{\sbpp{} Protocol}
\label{sec:protocol}

\subsection{Protocol Overview}

Fig.~\ref{fig:protocol} shows the \sbpp{} message sequence.

\begin{figure}[!t]
\centering
\resizebox{\columnwidth}{!}{%
\begin{tikzpicture}[
  >=Stealth,
  entity/.style={font=\footnotesize\bfseries, minimum width=1.2cm},
  msg/.style={font=\scriptsize, above, midway},
  msgb/.style={font=\scriptsize, below, midway},
  stepnum/.style={font=\scriptsize\itshape, text=gray!60!black},
]
\node[entity] (C) at (0, 0) {Client};
\node[entity] (S) at (6, 0) {Server};

\draw[dashed, gray] (C.south) -- ++(0, -7.5);
\draw[dashed, gray] (S.south) -- ++(0, -7.5);

\draw[->, thick] (0, -0.8) -- node[msg] {\texttt{initSession()}} (6, -0.8);
\node[stepnum] at (-1.5, -0.8) {1};

\draw[<-, thick] (0, -1.5) -- node[msg] {$S, N, t_{\exp}$} (6, -1.5);
\node[stepnum] at (7.2, -1.5) {2};

\node[stepnum, align=left] at (-1.5, -2.4) {3};
\node[font=\scriptsize, align=left] at (0.2, -2.4) {$T \gets \text{HMAC}(k,$\\$\;\text{``gridse:}\mathit{p}\text{:}\mathit{gh}\text{''})$};

\draw[->, thick] (0, -3.3) -- node[msg] {$T$ (opaque tokens)} (6, -3.3);
\node[stepnum] at (-1.5, -3.3) {4};

\draw[<-, thick] (0, -4.0) -- node[msg] {candidate drop IDs} (6, -4.0);
\node[stepnum] at (7.2, -4.0) {5};

\node[stepnum, align=left] at (-1.5, -5.0) {6--7};
\node[font=\scriptsize, align=left] at (0.4, -5.1) {select drop $D$\\$cd \gets H(\LP(\ldots,N,\mathit{root}))$\\generate $\pi$ + Merkle path};

\draw[->, thick] (0, -6.2) -- node[msg] {$\pi$, public signals, $S$} (6, -6.2);
\node[stepnum] at (-1.5, -6.2) {8};

\node[font=\scriptsize, align=left] at (6.3, -6.9) {verify $\pi$\\check $N = N_S$};
\node[stepnum] at (7.7, -6.8) {9};

\draw[<-, thick] (0, -7.3) -- node[msg] {grant / deny} (6, -7.3);
\node[stepnum] at (7.2, -7.3) {10};

\end{tikzpicture}}
\caption{\sbpp{} protocol. After token matching (step~5), the server
computes a Merkle root over the result set (step~6). The client
embeds both $N$ and the root in the ZKP challenge digest (step~8)
and submits a Merkle membership proof for the selected drop.
Search tokens (step~3) do not contain $N$.}
\label{fig:protocol}
\end{figure}
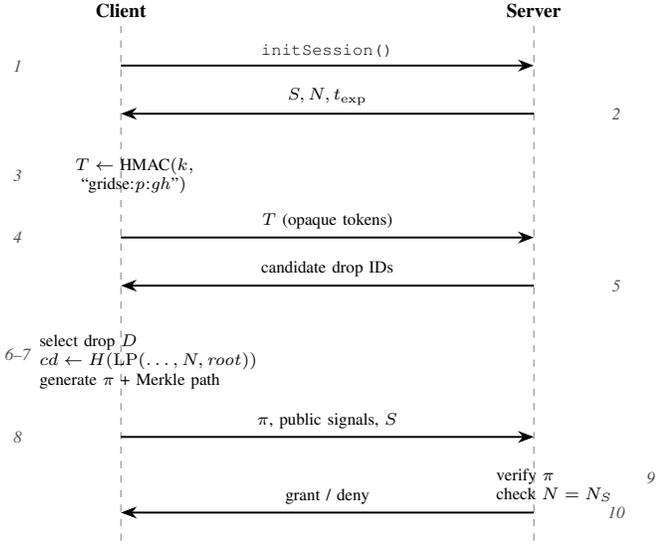

The protocol proceeds as follows:

\begin{enumerate}
\item Client requests a search session.
\item Server generates a cryptographically random nonce $N$,
  session ID $S$, and expiry time $t_{\exp}$. Stores $(S, N, t_{\exp})$.
\item Client generates encrypted search tokens
  $T = \{\text{HMAC}(k, \texttt{gridse:index:}p\texttt{:}g_i)\}$
  for center geohash $g_0$ and neighbors $g_1, \ldots, g_8$ at
  precision~$p$ selected for the search radius.
\item Client sends $T$ to the server.
\item Server performs set intersection against stored index tokens
  and returns matching drop IDs with metadata.
\item Server builds a binary Merkle tree over sorted candidate IDs
  $\{D_1, \ldots, D_k\}$ and stores only the Merkle root
  $\mathit{root}(\mathcal{R})$ in the session record.
\item Client selects a drop $D$ to unlock.
\item Client computes the \sbpp{} challenge digest:
  \begin{equation}
  cd = H(\LP(\texttt{SBPP-v1}, D, \text{pv}, e, N,
    \mathit{root}(\mathcal{R})))
  \label{eq:sbpp-digest}
  \end{equation}
  and generates a Groth16 proximity proof $\pi$ with $cd$ as a
  public input (\texttt{pub[7]}).
\item Client sends $\pi$, public signals, session ID $S$, and a
  Merkle membership proof for $D$ to the server.
\item Server verifies: (a)~$S$ is a valid, non-expired session with
  nonce $N$ and root $\mathit{root}(\mathcal{R})$;
  (b)~the proof's challenge digest matches Eq.~\ref{eq:sbpp-digest};
  (c)~the Merkle membership proof for $D$ is valid against
  $\mathit{root}(\mathcal{R})$;
  (d)~the Groth16 proof $\pi$ is valid.
\item Server returns grant (and consumes the session) or deny.
\end{enumerate}

\noindent Unless otherwise noted, the steps above describe
\emph{Full \sbpp{} in compact-online mode}, where the verifier
stores only a Merkle root and checks a membership witness.
\emph{Core \sbpp{}} is obtained by omitting
$\mathit{root}(\mathcal{R})$ from the challenge digest, omitting
the Merkle witness, and retaining $\mathcal{R}$ server-side for
the explicit online membership check (V4a).

\subsection{Session Management}

Sessions are server-side state with three properties:
\begin{itemize}
\item \textbf{Uniqueness}: Nonce $N$ is 256 bits of cryptographic
  randomness (collision probability $< 2^{-128}$ after $2^{64}$
  sessions).
\item \textbf{Expiry}: Default TTL of 5 minutes. Expired sessions
  reject all operations.
\item \textbf{Consumption}: A session is consumed (deleted) after
  one successful proof verification. This prevents proof replay
  within the same session.
\end{itemize}

Server-side state depends on the verifier mode. In Core/stateful
mode, the server stores $(N, t_{\exp}, \mathcal{R})$ and enforces
online authorization via $D \in \mathcal{R}$; state is
$O(|\mathcal{R}|)$ per session. In Full/compact mode, the server
stores $(N, t_{\exp}, \mathit{root}(\mathcal{R}))$---a single hash
value regardless of $|\mathcal{R}|$; state is $O(1)$ per session.
In both modes, total active sessions are bounded by the concurrent
user count (5-minute TTL, one-time consumption).

\subsection{Search-Proof Binding}
\label{sec:protocol:binding}

The binding between search and verification is achieved through
the nonce $N$:

\begin{itemize}
\item The search phase operates within session $S$ associated
  with nonce $N$.
\item The proof phase includes $N$ in the challenge digest
  (Eq.~\ref{eq:sbpp-digest}), which becomes a public input to
  the Groth16 circuit.
\item The server verifies that the proof's challenge digest
  matches the nonce stored for session $S$.
\end{itemize}

Because the challenge digest is a SHA-256 hash over
length-prefixed fields including~$N$, changing~$N$ changes the
digest with overwhelming probability (collision resistance).
The Groth16 proof commits to the digest as a public input, so
it cannot be modified post-generation without invalidating the proof
(soundness of Groth16).

\subsection{Digest-Based Binding}

The existing circuit~\cite{zairn2026} accepts a challenge digest
as public input \texttt{pub[7]} but does \emph{not} recompute it
internally. The authorization context $(N, \mathit{root}, D,
\text{pv}, e)$ is bound to the proof through a \emph{commitment
chain}: (1)~the prover computes $cd = H(\LP(\ldots))$ over the
full authorization context and supplies it as \texttt{pub[7]};
(2)~Groth16 soundness ensures $\pi$ is valid only for the
committed \texttt{pub[7]}; (3)~the verifier independently
recomputes $H(\LP(\ldots))$ from its own session record and
checks equality with \texttt{pub[7]}; (4)~SHA-256 collision
resistance and LP injectivity ensure that equality implies
agreement on $N$, $\mathit{root}$, $D$, $\text{pv}$, and $e$.
Thus, the authorization context is \emph{instance-bound} to the
proof via \texttt{pub[7]}: the verifier independently recomputes
the digest and checks equality, so any context mismatch is
detected without circuit-internal processing. This preserves:
\begin{itemize}
\item No new trusted setup or verification key.
\item Circuit constraints unchanged (474).
\item Backward compatibility with non-SBPP flows.
\end{itemize}
\noindent A domain separator \texttt{SBPP-v1} as the first LP
field prevents downgrade attacks.

\subsection{Result-Set Binding}

Session binding (Eq.~\ref{eq:sbpp-digest}) ties the proof to a
session but not to the \emph{specific drops returned} in that
session. Full \sbpp{} closes this gap by committing a Merkle
root~\cite{merkle-signature} of the authorized result set inside
the challenge digest (and thus inside $\text{pub}[7]$).

The server builds a binary Merkle tree over sorted candidate IDs
and stores only the root $\mathit{root}(\mathcal{R})$. The client
recomputes the tree locally and embeds the root in
Eq.~\ref{eq:sbpp-digest}. When unlocking drop $D$, the client
also submits a Merkle membership proof. This binds the proof to
(i)~session ($N$), (ii)~drop ($D$), and (iii)~result set
($\mathit{root}(\mathcal{R})$). Forging requires breaking Merkle
or Groth16 security.

\subsection{Verifier Modes}

Full \sbpp{} supports three modes: (1)~\emph{Stateful online}:
server stores $\mathcal{R}$, checks $D \in \mathcal{R}$ directly
($O(|\mathcal{R}|)$ state). (2)~\emph{Compact online}: server
stores only the Merkle root ($O(1)$ state); client submits a
membership witness. (3)~\emph{Offline audit}: after state is
purged, an auditor re-verifies from $\tau_\text{audit} =
(\pi, \text{pub}, D, \mathit{merklePath}, \rho)$---the committed
$\text{pub}[7]$ plus authenticated openings. Core \sbpp{} supports
(1) only; Full supports all three.

\subsection{Receipt-Backed Offline Audit}

To satisfy P3 after session state is purged, the server issues a
\emph{signed session receipt}
$\rho = \text{Sign}_\text{srv}(S, N, t_{\exp},
\mathit{root}(\mathcal{R}), \text{mode}, \text{pv}, e)$
after the result set and Merkle root are computed. The audit log
retains
$(\rho, D, \mathit{merklePath}, \text{pub}, \pi)$, where $\pi$ is
the full Groth16 proof ($\sim$724\,B)---retaining $\pi$ (not only
its hash) is necessary for independent re-verification. An auditor
replays authorization via four checks: (i)~server signature on
$\rho$; (ii)~recomputation of $cd_\text{full}$ from receipt
fields; (iii)~Merkle membership of $D$; (iv)~Groth16 proof
verification. The proof's $\text{pub}[7]$ commits the
authorization context; $\rho$ and $\mathit{merklePath}$ serve as
\emph{authenticated openings} that an auditor checks against that
instance binding, yielding authorization provenance (P3)
independent of mutable session state. The audit log may be held by
the server operator or an independent compliance party; the auditor
needs only the server's public signing key---no session state,
MAC secrets, or token format knowledge.

\section{Security Analysis}
\label{sec:security}

\subsection{Search-Authorized Proofs}

A search-authorized proof scheme consists of five algorithms:
\begin{itemize}
\item $\text{Issue}(1^\lambda) \to (S, N, t_{\exp})$: server
  creates a session with random nonce.
\item $\text{Search}(S, T) \to (\mathcal{R},
  \mathit{root}(\mathcal{R}), \rho)$: server matches tokens,
  builds Merkle tree, signs receipt.
\item $\text{Prove}(D, N, \mathit{root}, w) \to (\pi,
  \text{pub})$: client generates ZKP with $cd$ in pub[7].
\item $\text{Verify}(S, \pi, \text{pub}, D,
  \mathit{merklePath}) \to \{0,1\}$: server checks session,
  digest, Merkle witness, and proof.
\item $\text{Audit}(\pi, \text{pub}, D, \mathit{merklePath},
  \rho) \to \{0,1\}$: offline re-verification using receipt
  signature, digest recomputation, Merkle membership, and Groth16
  verification---no live session state.
\end{itemize}

\noindent We define three security properties. P1 and P2 are
binding/soundness requirements; P3 captures whether an adversary
can produce a \emph{misleading} audit record.

\begin{definition}[Search-Authorized Proof]
\label{def:sap}
Let $\Pi$ be a NIZK proof system with public inputs and
$\mathcal{Q}$ a search protocol that returns a candidate set
$\mathcal{R}$ within a session~$S$. A \emph{search-authorized
proof} (SAP) \emph{scheme} for $(\Pi, \mathcal{Q})$ satisfies:

\smallskip\noindent\textbf{P1 (Authorization binding).}
A valid proof for drop $D$ under session $S$ cannot be accepted
under any $S' \neq S$.

\smallskip\noindent\textbf{P2 (Result-set soundness).}
A valid proof for $D \notin \mathcal{R}_S$ cannot be accepted
under $S$.

\smallskip\noindent\textbf{P3 (Authorization provenance).}
Consider an adversary $\mathcal{A}$ that interacts with oracles
$\mathcal{O}_\text{Issue}$, $\mathcal{O}_\text{Search}$,
$\mathcal{O}_\text{Prove}$ (obtaining honest session transcripts
and proofs), and then outputs a candidate audit record
$\tau^*_\text{audit} = (\pi^*, \text{pub}^*, D^*,
\mathit{mp}^*, \rho^*)$. $\mathcal{A}$ wins if
$\text{Audit}(\tau^*_\text{audit}) = 1$ but $\pi^*$ was
generated by $\mathcal{O}_\text{Prove}$ under a different
authorization context $(N', \mathit{root}')$ from the one
claimed by $\rho^*$---i.e., $N' \neq N_{\rho^*}$ or
$\mathit{root}' \neq \mathit{root}_{\rho^*}$. P3 requires
$\Pr[\mathcal{A} \text{ wins}] \leq \negl(\lambda)$.
\end{definition}

\noindent We distinguish the \emph{proof transcript}
$\tau = (\pi, \text{pub})$ from the \emph{audit transcript}
$\tau_\text{audit}$, which additionally includes the Merkle
witness and signed receipt.

\sbpp{} instantiates this definition: Core satisfies P1,
Full satisfies P1+P2, and Full with signed receipts satisfies
P1+P2+P3. Before showing how, we first establish \emph{what must
be committed}---a design requirement that constrains any SAP scheme,
not just \sbpp{}.

\subsection{Necessity of Committed Content}

\noindent\textbf{Observation 1 (Formal Requirements).}
\label{thm:necessity}
For any SAP scheme built over a sound NIZK:
\begin{itemize}
\item \textbf{P1 requires session-distinguishing pub.}
  If $\text{pub}(S_1, D) = \text{pub}(S_2, D)$ for $S_1 \neq S_2$
  with the same $(D, \text{pv}, e)$, then $\pi$ valid under $S_1$
  is also valid under $S_2$, violating P1.
\item \textbf{P3 requires server authentication of committed
  values.} pub is client-supplied; without an authenticated
  artifact (receipt), a client can fabricate $(N, \mathit{root})$
  and pass audit.
\end{itemize}

\noindent\textbf{Remark (Design Rationale for P2).}
P2 under compact verification ($o(n)$ verifier state) strongly
suggests a result-set commitment in pub. With $2^n$ possible
result sets and $o(n)$ bits of state, the verifier cannot
distinguish all membership queries from state alone (pigeonhole).
A commitment in pub checked via a witness is the standard approach;
we state this as a design rationale rather than a formal
impossibility.

\smallskip\noindent\textit{Justification of Observation~1.}
\textbf{P1}: By NIZK soundness, $V(\text{pub}, \pi)$ is
deterministic. If pub is session-independent, any $\pi$ valid
under $S_1$ is accepted under $S_2$---violating P1.
\textbf{P3}: pub is prover-supplied. Without server
authentication, the prover can fabricate $(N, \mathit{root})$ and
produce a valid proof and Merkle witness for any self-chosen
$\mathcal{R}$, passing audit with fabricated context.

\noindent\textbf{Implications.} These design requirements
constrain any SAP scheme: P1 and P3 impose formal requirements
(provably necessary), while P2 imposes a strong design
requirement under compact verification. \sbpp{}
instantiates them with minimal constructions: a random nonce, a
Merkle root ($O(1)$ state), and a signed receipt. The separation below shows that proof-external designs, which omit
the first two ingredients from pub, are vulnerable to audit
re-association.

\subsection{Separation: Proof-External Authorization}
\label{sec:security:separation}

\begin{definition}[Proof-external authorization]
An authorization mechanism is \emph{proof-external} if the
authorization evidence (capability token, MAC, etc.) is verified
alongside the NIZK proof but is \emph{not committed as a public
input} to the proven statement.
\end{definition}

\noindent\textbf{Proposition (Audit Re-Association Attack).}
For proof-external authorization schemes in which the same
$(D, \text{pv}, e)$ can appear across sessions, we construct
an attack in which the adversary produces a misleading audit
record that passes verification yet misattributes the proof to a
different session.

\smallskip\noindent\textit{Attack construction.}
Suppose an adversary observes two honest sessions $S_1, S_2$ for
the same drop $D$ (same pv, epoch). In a proof-external scheme,
both sessions produce proofs $\pi_1, \pi_2$ whose public inputs
agree on all session-independent fields (since pub does not encode
the session). The adversary
constructs a forged audit record by pairing $\pi_1$ (generated
under $S_1$) with the authorization evidence from $S_2$ (e.g., a
signed capability token $\sigma_2$ or receipt $\rho_2$). This
record passes audit: $\sigma_2$ verifies (valid signature),
$\pi_1$ verifies (valid proof for the same pub), yet the proof was
generated under $S_1 \neq S_2$.

\smallskip\noindent\textit{Precondition.}
This is a \emph{conditional} separation: the attack requires two
sessions for the same $(D, \text{pv}, e)$. Epoch rotation reduces
the eligible pool; the attack is most potent within an epoch
window---a realistic scenario for frequently accessed content.

Under \sbpp{}, this attack fails: $\pi_1$ commits to
$cd(S_1, D)$ which embeds $N_{S_1}$, while the receipt from
$S_2$ contains $N_{S_2} \neq N_{S_1}$, so the digest
recomputation in audit step~(ii) produces a mismatch.

\smallskip\noindent\textbf{Scope and limitations.}
This separation concerns \emph{authorization provenance}, not
online authorization. V5/V7 provide effective online binding and
may be entirely sufficient for deployments without post-hoc audit.

\smallskip\noindent\textbf{Why log-level mitigations are
insufficient.}
One might argue that enriching the audit log with heuristic
metadata (timestamps, IPs, correlation IDs) could prevent
re-association. We show this is insufficient in principle.

The NIZK verification function $V(\text{pub}, \pi)$ is
deterministic and depends \emph{only on pub}. For a
session-agnostic proof, $V$ returns 1 for any session sharing
$(D, \text{pv}, e)$. No external data---regardless of its
richness---can change $V$'s output or create a
\emph{cryptographic} link between $\pi$ and a specific session.
An adversary who holds two valid $(\pi_i, \sigma_i)$ pairs can
swap them; the audit log sees two individually valid records
($V$ accepts both, both signatures verify) with no
cryptographically detectable inconsistency. Heuristic correlators
(e.g., timestamp proximity) are probabilistic, can be defeated by
timing manipulation, and do not compose into a verifiable proof of
provenance.

This is exactly the gap that design requirement~(1)
identifies: \emph{session-distinguishing information must be in
pub} for any SAP scheme. Log enrichment operates outside pub and
therefore cannot substitute.
Table~\ref{tab:attack} confirms: V5--V7 pass A1--A4a but fail
A4b.

\subsection{Sufficiency: \sbpp{} Satisfies P1--P3}

We now show that \sbpp{} satisfies the properties identified
by the design requirements above. We write
$N_S$ for the session nonce, $\mathit{root}_S$ for the committed
Merkle root, and define:
\begin{align}
cd_\text{core}(S,D) &= H(\LP(\texttt{SBPP-v1}, D, \text{pv}, e, N_S))
\label{eq:cd-core} \\
cd_\text{full}(S,D) &= H(\LP(\texttt{SBPP-v1}, D, \text{pv}, e, N_S, \mathit{root}_S))
\label{eq:cd-full}
\end{align}

\subsection{Session Binding Soundness}

\begin{definition}[Core \sbpp{} Verification]
$\text{Verify}_\text{core}(S, \tau)$ accepts transcript
$\tau = (\pi, \text{pub}, D)$ under session $S$ iff:
(i)~$S$ is valid and non-expired;
(ii)~$\text{pub}[7] = cd_\text{core}(S, D)$;
(iii)~$\pi$ is a valid Groth16 proof over $\text{pub}$.
\end{definition}

\begin{theorem}[Core Session Binding]
\label{thm:core-binding}
If SHA-256 is collision-resistant and Groth16 is sound, then for
any PPT adversary $\mathcal{A}$ that generates transcript $\tau^*$
using a nonce $N^* \neq N_S$:
\[
\Pr[\text{Verify}_\text{core}(S, \tau^*) = 1] \leq \negl(\lambda).
\]
\end{theorem}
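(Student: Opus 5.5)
The plan is a reduction argument with a case split on what the adversary places in $\text{pub}^*[7]$. The phrase ``generates $\tau^*$ using a nonce $N^* \neq N_S$'' will be read as follows: the digest $\text{pub}^*[7]$ is, or was computed as, $H(\LP(\texttt{SBPP-v1}, D^*, \text{pv}^*, e^*, N^*))$ for some $N^*$ known to (or extractable from) $\mathcal{A}$. A typical case is that $\mathcal{A}$ holds an honest transcript from another session $S'$ and resubmits it under $S$.

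Suppose $\text{Verify}_\text{core}(S,\tau^*)=1$. Then check (ii) gives
\[
H(\LP(\texttt{SBPP-v1}, D^*, \text{pv}^*, e^*, N^*)) = \text{pub}^*[7] = cd_\text{core}(S, D^*) = H(\LP(\texttt{SBPP-v1}, D^*, \text{pv}, e, N_S)).
\]
The two preimages are distinct. The first step is to argue that $\LP(\cdot)$ is injective, since each field is length-prefixed, so the encoding parses uniquely. Because $N^* \neq N_S$, the two encoded strings therefore differ. Such an acceptance thus yields a SHA-256 collision.

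The reduction $\mathcal{B}$ works as follows. It runs $\mathcal{A}$ and simulates $\text{Issue}$ honestly, sampling $N_S$ itself. On any accepting output it returns the pair of encodings above. Hence $\Pr[\text{accept}] \le \mathrm{Adv}^{\mathrm{cr}}_{H}(\mathcal{B}) + \Pr[\text{bad}]$, where ``bad'' covers the remaining case. The one subtlety is the reduction of SHA-256 modulo $q$: equality in $\mathbb{F}_q$ is not equality of 256-bit outputs. I would handle this by treating $H$ itself (SHA-256 then mod $q$) as the collision-resistant function. Alternatively, one can bound the extra collisions introduced by the reduction; this is a negligible loss when modeling SHA-256 as a random oracle.

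The main obstacle is the case in which $\mathcal{A}$ does \emph{not} supply a digest formed from some $N^*$. Instead it simply sets $\text{pub}^*[7] = cd_\text{core}(S, D)$ directly and tries to produce a valid $\pi^*$ for that instance. Here the argument shifts to Groth16. Note that the circuit does not recompute the digest, so the digest is effectively a free public input, and soundness alone does not forbid proving any value of it. I would therefore pin down the statement's meaning as ``$\pi^*$ was produced with respect to public input committing to $N^*$.'' Groth16 soundness, in its instance-binding, simulation-extractable form, implies that $\pi^*$ cannot be mauled to verify under a different $\text{pub}[7]$ than the one it was generated for. Concretely, $\mathcal{B}'$ would take a proof $\pi$ that $\mathcal{A}$ obtained for the instance with digest $cd(N^*)$. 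If the instances differ, $\mathcal{B}'$ outputs a verifying $(\pi^*, \text{pub}^*)$ with $\text{pub}^*[7] = cd_\text{core}(S,D)$, which is a fresh instance, breaking the assumed soundness/non-malleability. If the instances are equal, the collision case above applies.

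Summing the two cases gives the bound $\mathrm{Adv}^{\mathrm{cr}}_H + \mathrm{Adv}^{\mathrm{snd}}_{\text{Groth16}} \le \negl(\lambda)$. The nonce collision bound from Session Management (256-bit random $N$) covers the corner case where $N^*$ coincides with $N_S$ by chance.
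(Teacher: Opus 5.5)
Your main line is exactly the paper's proof sketch: acceptance under check~(ii) forces $H(\LP(\ldots,N^*)) = H(\LP(\ldots,N_S))$, and injectivity of $\LP$ makes the two preimages distinct because $N^* \neq N_S$, which yields a SHA-256 collision; Groth16 is invoked only to tie $\text{pub}^*[7]$ to $\pi^*$, and your mod-$q$ remark is a fair refinement that the paper glosses over. One correction to your second case: the circuit leaves the digest unconstrained and the coordinates are unauthenticated, so anyone can freshly prove the (true) instance with $\text{pub}[7] = cd_\text{core}(S,D)$, and no contradiction can come from soundness or simulation-extractability (such a proof is simply not ``generated using $N^*$'') --- what really pins the digest to a given $\pi^*$, the step the paper loosely calls ``Groth16 soundness,'' is that a fixed Groth16 proof verifies against at most one value of $\text{pub}[7]$ when the other inputs are fixed, because the verification key's input point for $\text{pub}[7]$ is nonzero in the standard setup.
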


\begin{proof}[Proof sketch]
$\text{Verify}_\text{core}$ requires $\text{pub}[7] =
cd_\text{core}(S,D) = H(\LP(\ldots, N_S))$. Since $\LP$ is
injective, $\LP(\ldots, N^*) \neq \LP(\ldots, N_S)$ for
$N^* \neq N_S$. By Groth16 soundness, $\text{pub}[7]$ is the
digest actually committed in $\pi$. For acceptance, $\mathcal{A}$
must therefore find a SHA-256 collision, which succeeds only with
negligible probability.
\end{proof}

\subsection{Cross-Session Isolation}

\begin{theorem}[Cross-Session Isolation]
\label{thm:cross-session}
For sessions $S_1, S_2$ with $N_1 \neq N_2$, any transcript
$\tau_1$ accepted under $S_1$ is rejected under $S_2$:
\[
\Pr[\text{Verify}_\text{core}(S_2, \tau_1) = 1] \leq \negl(\lambda).
\]
\end{theorem}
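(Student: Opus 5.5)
The plan is to reduce this statement to Theorem~\ref{thm:core-binding}, with a direct digest-mismatch argument as a fallback. Let $\tau_1 = (\pi_1, \text{pub}_1, D)$ be accepted under $S_1$. Unfolding the definition of $\text{Verify}_\text{core}$ at $S_1$ gives
\[
\text{pub}_1[7] = cd_\text{core}(S_1, D) = H(\LP(\texttt{SBPP-v1}, D, \text{pv}, e, N_1)).
\]
Verification of $\tau_1$ under $S_2$ reuses the same $\text{pub}_1$ and $\pi_1$. Condition~(iii), Groth16 validity, depends only on $(\text{pub}_1,\pi_1)$, so it still holds and gives no protection. Everything therefore rests on condition~(ii): we need $\text{pub}_1[7] = cd_\text{core}(S_2, D)$. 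If $S_2$ is invalid or expired, condition~(i) already rejects, so I will assume $S_2$ is live.

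The key step is to show that condition~(ii) fails except with negligible probability. Passing it would require
\[
H(\LP(\texttt{SBPP-v1}, D, \text{pv}, e, N_1)) = H(\LP(\texttt{SBPP-v1}, D, \text{pv}, e, N_2)).
\]
Because $\LP$ is injective and $N_1 \neq N_2$, the two preimages are distinct. Equality would then be a SHA-256 collision that is efficiently computable from the transcript together with the two session records. So any adversary that causes acceptance with non-negligible probability becomes a collision finder with the same advantage. The reduction is simple: it simulates $\text{Issue}$ honestly, records $N_1$ and $N_2$, and outputs the two encodings.

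One technical point is that $H$ is SHA-256 reduced modulo $q$, so a collision modulo $q$ is not literally a SHA-256 collision. I would handle this by assuming that the reduced hash $H$ is itself collision resistant, which is the same assumption Theorem~\ref{thm:core-binding} implicitly uses. Alternatively, for honestly sampled nonces, the relevant quantity is the second-preimage or random-oracle probability, about $2^{-\log q}$.

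The alternative route is to observe that $\tau_1$ was generated with nonce $N_1 \neq N_{S_2}$, so Theorem~\ref{thm:core-binding} applied at $S = S_2$, $N^* = N_1$ gives the bound directly. I expect the main obstacle to be justifying the hypothesis $N_1 \neq N_2$ when the sessions are only assumed to be distinct. Here I would invoke the 256-bit nonce sampling to bound nonce collisions by $Q^2/2^{257}$ over $Q$ sessions, which is negligible.
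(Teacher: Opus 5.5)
Your argument is correct and is essentially the paper's proof. Since $\tau_1$ fixes $\text{pub}_1[7] = cd_\text{core}(S_1, D)$ and the Groth16 check is unchanged, acceptance under $S_2$ would force $H$ to agree on two distinct $\LP$ encodings (distinct because $N_1 \neq N_2$), contradicting collision resistance of $H$. Like your fix for the mod-$q$ issue, the paper's sketch states this assumption for the reduced hash $H$ rather than for SHA-256 itself. The nonce-collision bound you flag as the main obstacle is unnecessary, because $N_1 \neq N_2$ is a hypothesis of the theorem.
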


\begin{proof}[Proof sketch]
$\tau_1$ commits to $cd_\text{core}(S_1, D)$ which embeds $N_1$.
Verification under $S_2$ expects $cd_\text{core}(S_2, D)$ with
$N_2 \neq N_1$. By injectivity of $\LP$ and collision resistance
of $H$, the digests differ except with negligible probability.
\end{proof}

\subsection{Transcript-Level Result-Set Authorization}

\begin{definition}[Full \sbpp{} Verification]
$\text{Verify}_\text{full}(S, \tau)$ extends Core verification
with: (iv)~the Merkle membership witness for $D$ is valid against
$\mathit{root}_S$ committed in $cd_\text{full}(S, D)$.
\end{definition}

\begin{theorem}[Transcript-Level Authorization (A4b)]
\label{thm:transcript-auth}
For any $D \notin \mathcal{R}$ (the candidate set committed by
$\mathit{root}_S$), no PPT adversary can cause
$\text{Verify}_\text{full}(S, \tau) = 1$, assuming Merkle
target-binding security (no PPT adversary can produce a valid path
for $D \notin \mathcal{R}$ against a committed root), instantiated
via SHA-256.
\end{theorem}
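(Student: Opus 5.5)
The plan is a direct reduction: take any PPT adversary $\mathcal{A}$ that makes $\text{Verify}_\text{full}(S,\tau)=1$ for some $D\notin\mathcal{R}$, and turn it into an algorithm that breaks one of three assumptions with related probability. These are Merkle target-binding, collision resistance of SHA-256 together with LP injectivity, and Groth16 soundness. The argument follows the four verification checks in the Full definition, in the same way as the proofs of Theorems~\ref{thm:core-binding} and~\ref{thm:cross-session}.

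\textbf{Step 1: the digest fixes the root.} Suppose $\mathcal{A}$'s transcript $\tau=(\pi,\text{pub},D,\mathit{merklePath})$ is accepted. Check~(ii) forces $\text{pub}[7]=cd_\text{full}(S,D)$. The verifier computes this digest from its own session record $(N_S,\mathit{root}_S)$. By Groth16 soundness, this $\text{pub}[7]$ is the value actually committed in $\pi$.

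If $\mathcal{A}$ instead bound $\pi$ to a digest over some other root $\mathit{root}'\neq\mathit{root}_S$ (or another nonce), acceptance would need
\[
H(\LP(\ldots,\mathit{root}'))=H(\LP(\ldots,\mathit{root}_S)).
\]
By LP injectivity this is a SHA-256 collision, which happens only with negligible probability. So, except with negligible probability, the root that governs check~(iv) is exactly the server-stored $\mathit{root}_S=\mathit{root}(\mathcal{R})$. The client cannot substitute a root of its own choosing.

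\textbf{Step 2: reduce to Merkle target-binding.} Given that the root is fixed, acceptance also requires check~(iv): a valid membership path for $D$ against $\mathit{root}(\mathcal{R})$ with $D\notin\mathcal{R}$. Build a reduction $\mathcal{B}$ that simulates the whole session for $\mathcal{A}$. It plays Issue, computes $\mathcal{R}$ and the root honestly, and answers the proving steps, since the prover is $\mathcal{A}$ itself. $\mathcal{B}$ then outputs $(D,\mathit{merklePath})$.

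Whenever $\mathcal{A}$ wins and the Step~1 bad events do not occur, $\mathcal{B}$ breaks target-binding. A union bound then gives
\[
\Pr[\mathcal{A}\text{ wins}]\le \mathsf{Adv}^\text{cr}_H+\mathsf{Adv}^\text{tb}_\text{Merkle}+\mathsf{Adv}^\text{sound}_\text{Groth16},
\]
which is negligible.

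\textbf{Main obstacle.} The delicate point is making target-binding itself precise. The root is built over \emph{sorted} candidate IDs, so the leaves and the internal nodes must be domain-separated, for example by a leaf/node prefix. Without this, $\mathcal{A}$ could present an internal node value as a "leaf" $D$ and pass as a member (a second-preimage or type-confusion attack).

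My first attempt is to argue the standard fact that, with domain-separated hashing, any valid path for an element not among the leaves yields a SHA-256 collision at the first level where the claimed path diverges from the true tree. This reduces target-binding to collision resistance as well.

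If the implementation lacks domain separation, I would state target-binding as an explicit assumption on the tree construction, which is how the theorem is phrased, and note the prefix requirement as a condition for instantiating it via SHA-256.
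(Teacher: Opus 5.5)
Your proposal is correct and follows the paper's argument. The paper's sketch likewise notes that $cd_\text{full}(S,D)$ contains $\mathit{root}_S$, that acceptance needs a witness rebuilding $\mathit{root}_S$ from the leaf $H(\LP(\texttt{SBPP-LEAF},D))$, and that for $D\notin\mathcal{R}$ this means either a target-binding break (the appendix's Win$_2$ reduction forwards $(D,\mathit{mp},\mathit{root}_S,\mathcal{R}_S)$ to the Merkle challenger, just like your $\mathcal{B}$) or a collision/second preimage in the tree, which is domain-separated by \texttt{SBPP-LEAF}/\texttt{SBPP-NODE} exactly as your last paragraph requires; one small point is that check~(iv) runs against the verifier's own stored $\mathit{root}_S$, so your Step~1 and the Groth16 and digest-collision terms in your bound are unnecessary, and the result rests on target-binding alone, as the theorem states.
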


\begin{proof}[Proof sketch]
$cd_\text{full}(S, D)$ includes $\mathit{root}_S$. Acceptance
requires a valid Merkle witness that reconstructs $\mathit{root}_S$
from the leaf $H(\LP(\texttt{SBPP-LEAF}, D))$. For
$D \notin \mathcal{R}$, producing such a witness requires either
forging a membership path against the committed root or finding a
second-preimage/collision in the Merkle hash construction, both of
which succeed only with negligible probability.
\end{proof}

\noindent Core enforces A4a via online $D \in \mathcal{R}$ check.
Theorem~\ref{thm:transcript-auth} adds that result-set authorization is verifiable from
$\tau_\text{audit}$ alone (A4b).

\subsection{End-to-End Security}

We now state a composite security theorem that unifies P1--P3.

\begin{definition}[SAP Security Game]
\label{def:sap-game}
An adversary $\mathcal{A}$ interacts with:
$\mathcal{O}_\text{Issue}$ (obtain fresh $(S, N, t_{\exp})$),
$\mathcal{O}_\text{Search}$ (submit tokens, receive
$\mathcal{R}$, $\mathit{root}$, $\rho$),
$\mathcal{O}_\text{Prove}$ (request honest proofs for chosen
$(S, D)$), and
$\mathcal{O}_\text{Verify}$ (submit $(\tau, S)$ for online
check). $\mathcal{A}$ sees all oracle outputs and network
messages. $\mathcal{A}$ wins if it achieves any of:

\smallskip\noindent\textbf{Win$_1$ (Session rebinding).}
$\mathcal{A}$ produces $\tau$ such that
$\text{Verify}(S, \tau) = 1$ but $\tau$ was generated using
$N_{S'} \neq N_S$.

\smallskip\noindent\textbf{Win$_2$ (Unauthorized drop).}
$\mathcal{A}$ produces $\tau$ for drop $D$ such that
$\text{Verify}(S, \tau) = 1$ but $D \notin \mathcal{R}_S$.

\smallskip\noindent\textbf{Win$_3$ (Audit re-association).}
$\mathcal{A}$ produces $\tau^*_\text{audit}$ such that
$\text{Audit}(\tau^*_\text{audit}) = 1$ but $\pi$ was generated
under a different authorization context from what
$\tau^*_\text{audit}$ claims.
\end{definition}

\begin{theorem}[SAP Security of Full \sbpp{}]
\label{thm:sap}
Under SHA-256 collision resistance, Groth16 soundness, LP
injectivity, Merkle target-binding security, and EUF-CMA security
of the receipt signature scheme, the advantage of any PPT
$\mathcal{A}$ in the SAP security game
(Definition~\ref{def:sap-game}) is negligible:
\[
\Pr[\text{Win}_1 \lor \text{Win}_2 \lor \text{Win}_3] \leq
  \negl(\lambda).
\]
\end{theorem}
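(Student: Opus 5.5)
We bound the three events separately and conclude by the union bound,
\[
\Pr[\text{Win}_1 \lor \text{Win}_2 \lor \text{Win}_3] \leq \Pr[\text{Win}_1] + \Pr[\text{Win}_2] + \Pr[\text{Win}_3].
\]
For each event we build a reduction that simulates all four oracles of Definition~\ref{def:sap-game} honestly. This is straightforward because the challenger knows every session nonce, root and witness. The only exception is signing, which in the reduction for Win$_3$ is delegated to the EUF-CMA signing oracle.

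For Win$_1$ we reuse Theorem~\ref{thm:core-binding} and Theorem~\ref{thm:cross-session}. Acceptance under $S$ forces $\text{pub}[7] = cd_\text{full}(S,D)$. But $\tau$ was produced under $N_{S'} \neq N_S$, so the digest the prover committed to is $H(\LP(\ldots,N_{S'},\ldots))$. By LP injectivity the two preimages differ. Either the digests collide, which is a SHA-256 collision and gives a collision finder, or $\pi$ verifies on a public input other than the one it was generated for, which breaks Groth16 soundness. Nonce reuse across sessions only happens with the $2^{-128}$-type birthday probability, which we absorb into the negligible term.

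For Win$_2$ we invoke Theorem~\ref{thm:transcript-auth}. Acceptance in compact mode requires a valid Merkle path for $D$ against the stored $\mathit{root}_S$. Since $\mathit{root}_S$ was computed honestly by the server over $\mathcal{R}_S$, such a path for $D \notin \mathcal{R}_S$ breaks Merkle target-binding. In stateful mode the explicit check $D \in \mathcal{R}_S$ rules this case out deterministically.

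Win$_3$ is the main obstacle, because it is the only event evaluated without live state and because Definition~\ref{def:sap-game} states ``different authorization context'' informally. We make it precise as in P3: $\pi^*$ was output by $\mathcal{O}_\text{Prove}$ under context $(N',\mathit{root}')$, while $\rho^*$ carries $(N_{\rho^*},\mathit{root}_{\rho^*}) \neq (N',\mathit{root}')$. We then split on whether $\rho^*$ was ever issued by $\mathcal{O}_\text{Search}$.

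\emph{Case 1: $\rho^*$ was never issued.} Since $\text{Audit}$ checks the signature, $\rho^*$ is a valid forgery, and the EUF-CMA reduction simulates the game with the signing oracle and outputs $\rho^*$.

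\emph{Case 2: $\rho^*$ is an honest receipt.} Audit step~(ii) forces $\text{pub}^*[7] = H(\LP(\texttt{SBPP-v1},D^*,\text{pv},e,N_{\rho^*},\mathit{root}_{\rho^*}))$. Meanwhile $\pi^*$ was honestly generated for a public input whose slot~7 equals $H(\LP(\ldots,N',\mathit{root}'))$.
\begin{itemize}
\item If the adversary reused the honest $\text{pub}$, the two digests are equal on distinct LP-encodings, which is a SHA-256 collision.
\item If the adversary altered $\text{pub}$, then $\pi^*$ verifies on an instance different from the one it was generated for, which contradicts Groth16 soundness.
\end{itemize}

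Groth16 gives knowledge soundness rather than strict non-malleability, so this last step is the delicate one. We therefore phrase it as the commitment-chain assumption of the digest-binding subsection: a proof cannot be re-targeted to a different $\text{pub}[7]$. Summing the bounds for Win$_1$, Win$_2$ and the two cases of Win$_3$ gives a negligible total.
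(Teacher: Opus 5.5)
Your proposal is correct and takes essentially the same route as the paper's appendix: a union bound over three reductions, namely a digest-collision reduction for Win$_1$ (with Groth16 fixing $\text{pub}[7]$), a Merkle target-binding reduction for Win$_2$, and an EUF-CMA-or-collision reduction for Win$_3$. Splitting Win$_3$ on whether $\rho^*$ was issued, rather than on session versus result-set difference, only reorganizes the argument, and it shows the paper's Merkle branch in case (b) is not needed; the instance-binding step you flag as delicate is exactly what the paper implicitly attributes to ``Groth16 soundness''.
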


\begin{proof}[Proof sketch]
Win$_1$: given $\mathcal{A}$ that wins, we build a SHA-256
collision finder or Groth16 soundness breaker
(Theorems~\ref{thm:core-binding}--\ref{thm:cross-session}).
Win$_2$: given $\mathcal{A}$ that wins, we build a Merkle
second-preimage finder (Theorem~\ref{thm:transcript-auth}).
Win$_3$: $\mathcal{A}$ must either forge the receipt signature
(contradicting EUF-CMA) or produce a valid Merkle witness for
$D \notin \mathcal{R}$ under the signed root (contradicting
Merkle security), or produce a digest collision (contradicting
SHA-256). Each reduction loses at most a polynomial factor.
Full reductions are in Appendix~\ref{app:sap-proof}
(Win$_1$ $\to$ Theorem~5, Win$_2$ $\to$ Theorem~6,
Win$_3$ $\to$ Theorem~7).
\end{proof}

\subsection{Leakage Analysis}

Following the SSE leakage taxonomy~\cite{sse-leakage}, plaintext
search reveals the user's geohash cell directly. Encrypted search
(GridSE/\sbpp{}) reveals only the access pattern (which $k$ tokens
matched)---qualitatively different (content-based vs.\
location-based) but not zero~\cite{leaker-sok,leakage-inversion,quantifying-location-privacy}.
Per-query, \sbpp{} and GridSE have identical SSE leakage.
\sbpp{}'s improvement is in \emph{cross-phase} leakage: session
binding (Theorem~\ref{thm:cross-session}) prevents re-association of proofs with
different search sessions, and session consumption prevents replay.

\smallskip\noindent\textbf{Residual leakage.}
SSE access patterns (deterministic HMAC; mitigated by
ORAM~\cite{oram}), volume patterns, and timing correlation remain.
These are inherent to the SSE layer and orthogonal to \sbpp{}.

\section{Implementation}
\label{sec:impl}

We implement \sbpp{} in Zairn~\cite{zairn2026} ($\sim$740 new/modified
LOC) by reusing the existing encrypted search module and Groth16
circuit (474 constraints, BN128) without modification. All fields
use length-prefixed UTF-8 encoding with domain separators
(\texttt{SBPP-v1}, \texttt{SBPP-LEAF}, \texttt{SBPP-NODE}).
SHA-256 outputs are mapped to the BN128 scalar field by modular
reduction. Cross-language test vectors and canonicalization
details are provided in the open-source release~\cite{zairn2026}.
The implementation includes 151 unit and integration tests.

\section{Evaluation}
\label{sec:eval}

We address five research questions:
\textbf{RQ1}~leakage profile vs.\ plaintext/GridSE;
\textbf{RQ2}~attack prevention vs.\ naive composition;
\textbf{RQ3}~binding correctness;
\textbf{RQ4}~protocol-path overhead;
\textbf{RQ5}~scalability, mobile latency, and deployment properties.

\subsection{Setup}

Synthetic drops are distributed uniformly at random in the Tokyo
metropolitan area (35.6--35.8$^\circ$N, 139.6--139.9$^\circ$E).
We additionally evaluate under a clustered distribution and under
\textbf{real-world data}: 110{,}776 OpenStreetMap POIs (amenities,
shops, tourism, leisure) in the same bounding box, fetched via
the Overpass API. All experiments run on Node.js v24 on a consumer
laptop (x86-64). Scripts are in the open-source
repository~\cite{zairn2026}.

\subsection{RQ1: Leakage Profile}

Per-query SSE leakage (search pattern, access pattern, volume
pattern~\cite{sse-leakage}) is \emph{identical} for GridSE and
\sbpp{}---the same tokens are sent. Plaintext search additionally
reveals the user's geohash cell. \sbpp{}'s improvement is
orthogonal: \emph{authorization provenance} (A1/A2/A4b in
Table~\ref{tab:attack}). Cross-query Jaccard similarity is
$>$0.98 for all schemes (inherent to geohash).

\subsection{RQ2: Naive Composition vs.\ \sbpp{}}

A key question is whether the session binding must be embedded in
the proof's public inputs, or whether simpler alternatives suffice.
We compare nine protocol variants against six attack classes
($N = 100$ trials each):

\begin{itemize}
\item[\textbf{V1}] Plaintext search + ZKP (no encrypted search,
  no session binding).
\item[\textbf{V2}] GridSE + ZKP (encrypted search, no session
  binding in proof).
\item[\textbf{V3}] GridSE + app-layer nonce (server checks nonce
  separately, but nonce is \emph{not} in the proof digest).
\item[\textbf{V4a}] \sbpp{} Core (session nonce in proof digest;
  server also performs $D \in \mathcal{R}$ check).
\item[\textbf{V4b}] \sbpp{} Full (nonce + Merkle root in proof;
  transcript-level authorization via membership proof).
\item[\textbf{V5}] Signed capability (server signs sidecar token
  binding session + drop; not in proof).
\item[\textbf{V6}] Per-drop signed permit (authorization-only
  baseline; no ZKP---does not prove proximity).
\item[\textbf{V7}] MAC-bound result authorization (server computes
  MAC over result set; MAC key is server secret).
\item[\textbf{V8}] Internalized token hash (hash of signed
  capability token committed in \texttt{pub[7]}; strongest
  proof-internalized baseline).
\end{itemize}

Table~\ref{tab:attack} reports the results.

\begin{table}[!t]
\caption{Attack Matrix ($N$\,=\,100 trials). \checkmark\ = blocked;
$\times$\ = succeeds. V5--V7: proof-external; V8: internalized
token hash (strongest baseline).\label{tab:attack}}
\centering\footnotesize
\resizebox{\columnwidth}{!}{%
\begin{tabular}{lccccccccc}
\toprule
\textbf{Attack} & \textbf{V1} & \textbf{V2} & \textbf{V3} & \textbf{V4a} & \textbf{V4b} & \textbf{V5} & \textbf{V6} & \textbf{V7} & \textbf{V8} \\
\midrule
A1: Cross-session   & $\times$ & $\times$ & $\times$ & \checkmark & \checkmark & \checkmark & \checkmark & \checkmark & \checkmark \\
A2: Re-association  & $\times$ & $\times$ & $\times$ & \checkmark & \checkmark & \checkmark & \checkmark & \checkmark & \checkmark \\
A3: Cross-drop      & \checkmark & \checkmark & \checkmark & \checkmark & \checkmark & \checkmark & \checkmark & \checkmark & \checkmark \\
A4a: Result-set (online) & $\times$ & $\times$ & $\times$ & \checkmark & \checkmark & \checkmark & \checkmark & \checkmark & \checkmark \\
A4b: Auth.\ provenance & $\times$ & $\times$ & $\times$ & $\times$ & \checkmark & $\times$ & $\times$ & $\times$ & \checkmark$^*$ \\
A5: Delayed replay  & $\times$ & $\times$ & \checkmark & \checkmark & \checkmark & \checkmark & \checkmark & \checkmark & \checkmark \\
\bottomrule
\multicolumn{10}{l}{\scriptsize $^*$V8 passes A4b \emph{if} the token
contains equivalent content (session nonce + result-set root).} \\
\end{tabular}%
}
\end{table}

A1--A4a and A5 are prevented by all session-aware variants via
server-side checks. The \textbf{critical row is A4b}. Among
proof-external designs (V5--V7), none passes: re-association
succeeds because proofs are session-agnostic.

\textbf{V8 (internalized token hash)} commits an opaque
signed-token hash in pub[7], achieving A4b when the token includes
equivalent content. V8 validates the SAP principle that
internalization is necessary; the comparison axis between V8 and
\sbpp{} is therefore not security superiority but \emph{audit
semantics}: V8 couples authorization evidence into one hash,
so an auditor must reconstruct the full token to check any single
property. \sbpp{} decomposes the committed content into three
independently auditable components (nonce~$\to$~P1, Merkle
root~$\to$~P2, receipt~$\to$~P3), enabling property-level fault
isolation, $O(1)$ compact-mode state, and verifier-state
discipline (the server never stores $\mathcal{R}$ in Full mode).

\subsection{RQ3: Binding Validation}

All eleven scenarios produce outcomes consistent with the
design requirements and
Theorems~\ref{thm:core-binding}--\ref{thm:sap} (100/100 trials
each). These are implementation sanity checks; security rests on
the cryptographic assumptions.

\noindent\textbf{Fault-localization validation (V8 vs.\ \sbpp{}).}
We inject three fault types into audit records ($N = 100$ each):
session rebinding (swap nonce), result-set tampering (alter root),
and fabricated context (forge receipt signature). Full \sbpp{}
localizes each: nonce mismatch (100/100), Merkle witness failure
(100/100), and receipt signature failure (100/100) respectively.
V8 returns undifferentiated ``hash mismatch'' in all three cases
(100/100)---confirming the fault-isolation advantage in
Table~\ref{tab:v8}.

\subsection{RQ4: Protocol-Path Overhead}

Table~\ref{tab:perf} reports median latency over 1{,}000 iterations
with 100-iteration warmup.

\begin{table}[!t]
\caption{Protocol-Path Latency ($\mu$s, 1{,}000 drops; excludes
Groth16 proving, which is identical across schemes)\label{tab:perf}}
\centering
\begin{tabular}{lcccc}
\toprule
\textbf{Scheme} & \textbf{Med.} & \textbf{Mean} & \textbf{P95} & \textbf{P99} \\
\midrule
Plaintext      & 2.9   & 13.1  & 55.3  & 94.8  \\
GridSE         & 390.8 & 410.6 & 567.2 & 647.2 \\
\sbpp{}        & 417.9 & 441.2 & 641.7 & 764.1 \\
\bottomrule
\end{tabular}
\end{table}

\sbpp{} adds 27\,$\mu$s median over GridSE (+7\%), broken down as:
session creation 17\,$\mu$s, challenge digest 1.9\,$\mu$s, session
validation $<$1\,$\mu$s. The dominant cost in both GridSE and
\sbpp{} is HMAC token generation (359\,$\mu$s), which is identical.
Both encrypted schemes are $\sim$140$\times$ slower than plaintext
geohash encoding (2.9\,$\mu$s), but remain under 0.5\,ms
median---well within interactive latency budgets.

\noindent\textbf{Important caveat.}
The 7\% overhead is for the \emph{protocol path only} (session
management, token generation, digest computation), not for the
full unlock flow. Groth16 proof generation (42--125\,ms warm on
mobile; Table~\ref{tab:mobile}) dominates the user-perceived
latency and is identical across all schemes. The full
end-to-end overhead of \sbpp{} over GridSE is thus
$\sim$0.03\,ms on a $\sim$125\,ms baseline ($<$0.03\%).
\subsection{RQ5: Scalability and Real-World Distribution}

Token matching scales linearly: 0.04\,ms at 1{,}000 drops to
5.2\,ms at 100{,}000 (radius 1\,km). Precision is moderate
(0.45--0.61) due to geohash coarseness; recall is $>$0.92.
Under 110{,}776 OpenStreetMap POIs (Tokyo bounding box), match
count is $\sim$2.7$\times$ higher than uniform due to real-world
clustering, but precision improves (0.60 vs.\ 0.54) and recall
stays $>$0.93. Per-query SSE leakage is identical for GridSE
and \sbpp{} under all distributions.

\subsection{Session Consumption Atomicity}

Four scenarios ($N = 1{,}000$ each): sequential double-submit
(second always rejected), parallel double-submit (exactly one
accepted), expiry boundary (correct in 100\%), bulk
10{,}000-session lifecycle (exact counts). The in-memory store
is single-threaded; multi-instance deployments require
database-level atomic guards.

\subsection{Mobile End-to-End Latency}

Table~\ref{tab:mobile} reports warm-run results ($n = 30$) on
Xiaomi POCO X7 Pro (Android, Chrome~146, 8-core, 8\,GB) and
iPhone~16~Pro (iOS~18.7, Safari~26.2).

\begin{table}[!t]
\caption{Mobile Proof Latency (warm, $n = 30$)\label{tab:mobile}}
\centering\footnotesize
\begin{tabular}{lcc}
\toprule
& \textbf{POCO X7 Pro} & \textbf{iPhone 16 Pro} \\
\midrule
Prove median (ms)  & 125 & 42  \\
Prove mean (ms)    & 128 & 43  \\
Prove P95 (ms)     & 181 & 51  \\
Prove cold (ms)    & 340 & 192 \\
Verify median (ms) & 26  & 9   \\
Payload (B)        & 936 & 935 \\
\bottomrule
\end{tabular}
\end{table}

Warm proof generation is 42--125\,ms median; including \sbpp{}
overhead (0.4\,ms), the complete unlock is under 130\,ms.
Cold-start penalties (192--340\,ms) reflect WASM compilation.
Safari is $\sim$3$\times$ faster than Chrome for both phases.

\subsection{Merkle State Compression}

Table~\ref{tab:merkle} compares per-session state and verification
overhead for the stateful and Merkle-root verifier modes.

\begin{table}[!t]
\caption{Merkle-Root Verifier Scaling\label{tab:merkle}}
\centering\footnotesize
\begin{tabular}{rcccc}
\toprule
$|\mathcal{R}|$ & \textbf{Build (ms)} & \textbf{Prove ($\mu$s)} & \textbf{Verify ($\mu$s)} & \textbf{Steps} \\
\midrule
100      & 0.2  & 20   & 11    & 7  \\
1\,000   & 1.6  & 25   & 19    & 10 \\
5\,000   & 5.2  & 29   & 122   & 13 \\
10\,000  & 12.4 & 60   & 227   & 14 \\
20\,000  & 27.6 & 68   & 460   & 15 \\
50\,000  & 54.2 & 133  & 1137  & 16 \\
\bottomrule
\end{tabular}
\end{table}

Compact-mode state is 104\,B regardless of $|\mathcal{R}|$
(vs.\ $\sim$11\,kB stateful at 1{,}000). Tree construction scales
to 54\,ms at 50{,}000. Merkle verification scales from
19\,$\mu$s to 1.1\,ms---negligible vs.\ Groth16 (9--26\,ms).

\subsection{Offline Audit Replay}

100 sessions: build Merkle trees, generate audit records, purge
session state, re-verify from $\tau_\text{audit}$ alone. All 100
pass. The \sbpp{}-specific audit overhead (receipt signature
check, digest recomputation, Merkle membership) is
$\sim$4.4\,$\mu$s per record; Groth16 proof re-verification adds
9--26\,ms (Table~\ref{tab:mobile}), dominating the audit cost.
Under Core \sbpp{} (V4a), audit fails---no committed
root---confirming A4b requires Full \sbpp{}.

\subsection{Cross-Session Re-Association}

Over 1{,}000 sessions, an adversary who matches
$(D, \text{pv}, e)$ across session boundaries re-associates
proofs in 71.8\% of cases for V2/V3 (below 100\% because epoch
rotation reduces the eligible pool). Under \sbpp{} (V4a/V4b),
the success rate is 0\%---the session nonce in pub makes each
proof session-specific. This is \sbpp{}'s primary cross-phase
improvement; per-query SSE leakage is unchanged.

\noindent\textbf{Payload and audit storage.}
Full mode adds $\sim$320\,B per unlock (Merkle path); audit
records total $\sim$1.5\,kB each ($\sim$45\,GB/month at 1\,M
unlocks/day).

\subsection{Concurrency, Malicious Server, Isolation}

\noindent\textbf{Atomic consumption.}
Under DB-level \texttt{WHERE consumed=false} guards
($N = 1{,}000$), concurrent double-submit achieves exactly-one
acceptance in 100\% of trials.

\smallskip\noindent\textbf{Malicious server impact.}
We test five malicious behaviors ($N = 100$ each): candidate
omission (detectable with honest reference root, 100\%); biased
root signing (\emph{not} detectable---receipt authenticates what
was issued, not correctness); predictable nonce (enables cross-user
transfer; random nonces prevent this); session refusal
(\emph{not} detectable---DoS); authorization forgery
(\emph{impossible}---0\% forgery rate). These confirm P1--P3 under
HBC server while showing the completeness/fairness boundary.

\smallskip\noindent\textbf{Multi-client isolation.}
100\,k sessions in $\sim$1\,s ($\sim$19.5\,MB). Concurrent
issue+verify ($K = 10$--500) all correct. Cross-session isolation:
0/9{,}900 cross-access attempts succeeded.

\section{Discussion}
\label{sec:discussion}

\subsection{What \sbpp{} Protects and What It Does Not}

Core \sbpp{} prevents cross-session proof substitution (A1/A2) and
enables online result-set authorization (A4a). Full \sbpp{} lifts
authorization to the transcript level (A4b), enabling offline
auditability. Both levels prevent delayed replay (A5) via session
TTL and consumption.

Core ensures proofs belong to their originating session; Full
additionally commits the authorization context via Merkle root and
witness. Combined with signed receipts, this yields replayable
authorization provenance under HBC server.

\sbpp{} does \emph{not} address:
\begin{itemize}
\item \textbf{SSE access pattern leakage.} Deterministic tokens
  reveal whether two queries target the same cell. This is
  inherent to SSE; volume-hiding SSE or ORAM~\cite{oram,path-oram} would
  mitigate it at significant performance cost.
\item \textbf{Location spoofing.} \sbpp{} assumes the client's
  coordinates are genuine. Spoofing resistance requires
  orthogonal mechanisms (trust scoring, multi-sensor
  fusion~\cite{guardian-gps,secure-positioning}).
\item \textbf{Malicious server.} \sbpp{} assumes an
  honest-but-curious server. P1--P3 guarantee that accepted proofs
  are bound to the authorization context the server \emph{issued},
  but not that the server issued \emph{honestly} (e.g., it could
  omit candidates or sign a biased root). Extending to malicious
  servers requires verifiable query processing~\cite{ads-index}.
\end{itemize}

\subsection{Why Not Simpler Alternatives?}

\noindent\textbf{Internalized token hash (V8).}
V8 commits an opaque signed-token hash in \texttt{pub[7]},
validating the SAP principle that internalization is necessary.
However, V8 is an ad-hoc instantiation that couples all
authorization evidence into a single hash, whereas \sbpp{}
provides the \emph{practical decomposition} of SAP requirements
into independently verifiable components. This decomposition matters for forensic practice.
\emph{Example}: if an audit record fails, \sbpp{} localizes the
failure---a nonce mismatch indicates session rebinding (A1/A2),
a Merkle witness failure indicates result-set escape (A4b), and
a receipt signature failure indicates fabricated context (P3
violation). With V8's opaque hash, all three failure modes produce
the same symptom (hash mismatch), and the auditor cannot
distinguish them without reconstructing the full token.
The security contribution of this paper is the SAP notion and the
proof-external separation; \sbpp{} is a practical realization
that enables property-level fault isolation in audit.

\begin{table}[!t]
\caption{V8 vs.\ Full \sbpp{}: Audit Properties\label{tab:v8}}
\centering\footnotesize
\begin{tabular}{lcc}
\toprule
& \textbf{V8 (opaque)} & \textbf{Full \sbpp{}} \\
\midrule
A4b (provenance)     & \checkmark$^*$ & \checkmark \\
Fault localization   & No & Yes \\
Auditor prerequisites & Token format & Public key only \\
Token-format changes  & Breaks audit & Transparent \\
\midrule
\multicolumn{3}{l}{\textit{Failure-mode diagnostics}} \\
\midrule
Session rebinding (A1/A2) & hash mismatch & nonce mismatch \\
Result-set escape (A4b)   & hash mismatch & Merkle failure \\
Fabricated context (P3)   & hash mismatch & sig.\ failure \\
\bottomrule
\multicolumn{3}{l}{\scriptsize $^*$If token contains equivalent content.}
\end{tabular}
\end{table}

\noindent\textbf{Multi-instance deployment.}
Session consumption in the prototype uses an in-memory store
(single-threaded). In production, atomic consumption maps to a
database \texttt{UPDATE ... WHERE consumed = false} or a Redis
compare-and-swap; the Merkle root and receipt are stateless and
require no cross-instance coordination.

\noindent\textbf{Toward stronger privacy.}
Volume-hiding search and ORAM-based retrieval~\cite{oram,path-oram}
would reduce access-pattern leakage orthogonally to \sbpp{}.

\section{Related Work}
\label{sec:related}

\textbf{Encrypted geographic search and leakage.}
GridSE~\cite{gridse} introduced prefix-based SSE for geographic
queries; later work explored geometric range search and practical
private range search~\cite{hier-sse,range-sse}. SSE leakage and its
exploitation are also well-studied~\cite{curtmola-sse,sse-leakage,leaker-sok,leakage-inversion};
recent work studies privacy-preserving and verifiable spatial queries
over encrypted data~\cite{miao-spatial-query,verifiable-geo-query}.
\sbpp{} is complementary: it preserves per-query SSE leakage while
binding search authorization to a later proof.

\textbf{Location proofs and authentication.}
Groth16~\cite{groth16} underlies pairing-based ZKP proximity proofs.
Prior location-proof and location-authentication schemes typically
assume trusted witnesses, infrastructure, or trusted location
sources~\cite{proving-location,veriplace,applaus,stamp,locproof-survey,loc-auth-sigmod,loc-auth-wpes,private-proximity-tags,secure-positioning}.
\sbpp{} instead binds an \emph{encrypted search session} to a
ZKP-based unlock flow.

\textbf{Authenticated data structures and verifiable composition.}
Full \sbpp{}'s Merkle-root commitment is related to authenticated
data structures~\cite{merkle-signature,ads-generic,ads-dynamic,ads-index}.
More broadly, \sbpp{} addresses the composition problem of binding
a query result to a subsequent cryptographic action---a pattern
that arises whenever search and verification are separate protocol
phases. Spatial cloaking and differential privacy address
orthogonal query-side privacy goals~\cite{k-anon-loc,dp-loc,quantifying-location-privacy}.

\section{Conclusion}
\label{sec:conclusion}

We introduced the \emph{search-authorized proof} (SAP) security
notion for systems that compose encrypted geographic search with
zero-knowledge proximity verification: three properties (P1--P3)
that formalize when an audit transcript (proof, receipt, and
Merkle witness) can forensically attest the proof's originating
authorization. We exhibited an audit re-association
attack showing that proof-external mechanisms leave this boundary
open, and established design requirements for the committed content
required by each property.

The principal scientific contributions are the SAP notion and the
proof-external separation; \sbpp{} is a practical
realization, decomposing the required committed content into a
session nonce (P1), a Merkle root (P2), and a signed receipt (P3)
via a digest-based binding chain requiring no circuit modification.
Protocol-path overhead is 7\% in a single-instance prototype;
DB-backed atomic consumption and multi-client isolation were
validated in targeted experiments, though production-scale
multi-instance deployment remains future work.

\sbpp{} operates under an honest-but-curious server and guarantees
that accepted proofs are bound to the authorization context the
server \emph{issued}---not that the server issued
\emph{honestly}. Search completeness, fairness, and SSE
access-pattern leakage are outside its scope. Extending to
malicious servers via verifiable query processing, and reducing
per-query leakage via ORAM, are key directions for future work.
Artifacts are available at~\cite{zairn2026}.

\section*{AI Usage Disclosure}
Generative AI tools (OpenAI ChatGPT and Anthropic Claude) were
used for coding assistance, literature search, and editorial
refinement during manuscript preparation. All AI-assisted outputs
were reviewed, edited, and validated by the author, who takes full
responsibility for the content of the manuscript.

\section*{Declaration of Competing Interest}
The author declares no competing interests.

\appendices
\section{Proof of Theorem~\ref{thm:sap} (SAP Security)}
\label{app:sap-proof}

\noindent\textbf{Audit algorithm.}
$\text{Audit}(\pi, \text{pub}, D, \mathit{mp}, \rho)$:
(1)~verify $\text{Sig}_\text{srv}(\rho)$;
(2)~extract $(N, \mathit{root}, \text{pv}, e)$ from $\rho$;
(3)~recompute $cd' \gets H(\LP(\texttt{SBPP-v1}, D, \text{pv},
e, N, \mathit{root}))$; check $\text{pub}[7] = cd'$;
(4)~verify Merkle path $\mathit{mp}$ for leaf
$H(\LP(\texttt{SBPP-LEAF}, D))$ against $\mathit{root}$;
(5)~verify $\pi$ over $\text{pub}$ via Groth16.
Return 1 iff all checks pass.

\smallskip\noindent\textbf{Notation.}
``Different authorization context'' means $N' \neq N$
(\emph{session difference}) or $\mathit{root}' \neq \mathit{root}$
(\emph{result-set difference}).

\begin{theorem}[Win$_1$ Reduction]
If $\mathcal{A}$ wins Win$_1$, we construct a SHA-256 collision
finder $\mathcal{B}$ (Groth16 soundness ensures $\pi^*$ fixes
$\text{pub}^*[7]$, which is the starting point of the reduction).
\end{theorem}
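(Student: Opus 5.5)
The reduction takes a winning $\mathcal{A}$ and turns its output into a SHA-256 collision. $\mathcal{B}$ runs $\mathcal{A}$ and simulates every oracle honestly. $\mathcal{O}_\text{Issue}$ samples fresh 256-bit nonces and records each $(S, N_S)$. $\mathcal{O}_\text{Search}$ computes $\mathcal{R}$ and $\mathit{root}$ and signs receipts with a key that $\mathcal{B}$ generates itself. $\mathcal{O}_\text{Prove}$ runs the honest prover. $\mathcal{O}_\text{Verify}$ applies $\text{Verify}_\text{core}$ or $\text{Verify}_\text{full}$ against $\mathcal{B}$'s own session table. Because no secret outside $\mathcal{B}$'s control is involved, the simulation is perfect and $\mathcal{A}$ wins Win$_1$ with the same probability as in the real game. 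Expiry and consumption only shrink the set of accepting inputs, so they can be ignored.

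Next I extract the collision. Suppose $\mathcal{A}$ outputs $\tau^* = (\pi^*, \text{pub}^*, D^*)$ with $\text{Verify}(S, \tau^*) = 1$, where $\tau^*$ was generated using $N_{S'} \neq N_S$. I make this hypothesis precise as follows: $\pi^*$ is a proof whose committed $\text{pub}^*[7]$ equals $H(x^*)$ for some $x^* = \LP(\texttt{SBPP-v1}, D^*, \text{pv}^*, e^*, N_{S'}[, \mathit{root}'])$, and $\mathcal{B}$ knows $x^*$. It knows $x^*$ either because $\pi^*$ came from $\mathcal{O}_\text{Prove}$, whose query $\mathcal{B}$ logged, or by running the Groth16 knowledge extractor, which recovers the prover's pre-image. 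Groth16 soundness is what guarantees that the public input actually bound by $\pi^*$ is $\text{pub}^*[7]$, so $\mathcal{A}$ cannot have altered the digest after generation. Verification check (ii) then gives $\text{pub}^*[7] = H(x_S)$, where $x_S = \LP(\texttt{SBPP-v1}, D^*, \text{pv}, e, N_S[, \mathit{root}_S])$ is recomputed by $\mathcal{B}$ from its own session record. Since $N_{S'} \neq N_S$ and $\LP$ is injective with fixed field order, we have $x^* \neq x_S$. $\mathcal{B}$ outputs $(x^*, x_S)$.

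One more subtlety must be handled. $H$ is SHA-256 reduced modulo $q$, so $H(x^*) = H(x_S)$ may be only a collision mod $q$, not a raw SHA-256 collision. I would treat $H$ itself (SHA-256 followed by $\bmod\, q$) as the collision-resistant function. Alternatively, I would bound the loss by noting that each residue class mod $q$ has at most a constant number of 256-bit pre-images under the reduction, since $2^{256}/q < 6$. The advantage of $\mathcal{B}$ is then at least $\Pr[\text{Win}_1]$ minus the Groth16 soundness/extraction error, giving a polynomially tight bound.

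The step I expect to be the main obstacle is pinning down "generated using $N_{S'}$" so that $\mathcal{B}$ actually holds $x^*$. For proofs that $\mathcal{A}$ obtained from $\mathcal{O}_\text{Prove}$ this is immediate from the oracle log. For proofs that $\mathcal{A}$ forged itself, I would appeal to knowledge soundness in the generic group/AGM model. If that extraction is not available, I would instead define Win$_1$ relative to oracle-produced proofs only, which matches the re-association attacks A1/A2.
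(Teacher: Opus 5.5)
Your main argument is the paper's proof: acceptance forces $\text{pub}^*[7]=H(x_S)$, generation under $N_{S'}$ gives $\text{pub}^*[7]=H(x^*)$, and LP injectivity makes $(x^*,x_S)$ a collision. Your oracle simulation and mod-$q$ remark are extra care the paper skips, but two of your side branches do not work: the Groth16 extractor cannot supply $x^*$, because the circuit never recomputes $\text{pub}[7]$ and its witness contains no digest preimage, so restricting Win$_1$ to $\mathcal{O}_\text{Prove}$-produced proofs is the operative case---exactly the paper's implicit reading of ``$\pi^*$ was generated with $N'$''---and a collision modulo $q$ cannot be turned into a raw SHA-256 collision by counting preimages, so only your first fix (assuming $H$ itself is collision-resistant) gives a sound reduction.
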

\begin{proof}
$\mathcal{A}$ outputs $\tau^* = (\pi^*, \text{pub}^*, D)$
accepted by $\text{Verify}_\text{full}(S, \tau^*)$ (nonce $N_S$),
but $\pi^*$ was generated with $N' \neq N_S$. Verification
requires $\text{pub}^*[7] = cd_\text{full}(S, D)$, which embeds
$N_S$. By Groth16 soundness, $\pi^*$ fixes $\text{pub}^*[7]$.
Since $\pi^*$ was generated with $N'$, we have
$\text{pub}^*[7] = H(\LP(\ldots, N', \ldots))$. Acceptance
gives $H(\LP(\ldots, N_S, \ldots)) = H(\LP(\ldots, N', \ldots))$.
LP injectivity ensures the preimages differ: $\mathcal{B}$ outputs
this pair as a SHA-256 collision.
\end{proof}

\begin{theorem}[Win$_2$ Reduction]
If $\mathcal{A}$ wins Win$_2$, we construct $\mathcal{B}$ that
breaks Merkle authentication security (target-binding: given a
root $r$ and a set $\mathcal{R}$, find $D \notin \mathcal{R}$ and
a path $\mathit{mp}$ such that $\mathit{mp}$ verifies for $D$
against $r$).
\end{theorem}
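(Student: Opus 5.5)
The reduction is direct. $\mathcal{B}$ simulates the entire SAP game for $\mathcal{A}$ honestly, running $\mathcal{O}_\text{Issue}$, $\mathcal{O}_\text{Search}$, $\mathcal{O}_\text{Prove}$ and $\mathcal{O}_\text{Verify}$ itself with its own signing key and Groth16 proving key. It records, for every session $S$, the result set $\mathcal{R}_S$ and the root $\mathit{root}_S$ that it computed honestly in $\mathcal{O}_\text{Search}$. Because the server is honest-but-curious, each stored root is the genuine Merkle root over the sorted leaves $H(\LP(\texttt{SBPP-LEAF}, D_i))$ for $D_i \in \mathcal{R}_S$, so the simulation is perfect and $\mathcal{A}$ wins Win$_2$ with the same probability as in the real game.

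When $\mathcal{A}$ outputs $\tau^* = (\pi^*, \text{pub}^*, D, \mathit{mp}^*)$ with $\text{Verify}_\text{full}(S, \tau^*) = 1$ and $D \notin \mathcal{R}_S$, $\mathcal{B}$ unpacks acceptance condition (iv) of Full \sbpp{} verification. That condition says $\mathit{mp}^*$ reconstructs exactly the stored value $\mathit{root}_S$ from the leaf $H(\LP(\texttt{SBPP-LEAF}, D))$. In compact-online mode this check runs against the server's own session record, not against anything $\mathcal{A}$ supplies. Condition (ii) additionally forces $\text{pub}^*[7] = cd_\text{full}(S, D)$ with that same root, but the Merkle check alone is enough here. $\mathcal{B}$ therefore outputs $(r, \mathcal{R}, D, \mathit{mp}) = (\mathit{root}_S, \mathcal{R}_S, D, \mathit{mp}^*)$. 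This is exactly a target-binding break, and $\mathcal{B}$'s advantage equals $\mathcal{A}$'s Win$_2$ advantage. If the target session is fixed in advance there is at most a polynomial loss from guessing $S$ among the $q$ issued sessions; outputting whichever session $\mathcal{A}$ names avoids even that loss.

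The main obstacle is justifying that target-binding holds for this Merkle construction under SHA-256 collision resistance, which is the assumption the theorem rests on. I would argue it by walking the path: starting at the leaf for $D$, compare each level of $\mathit{mp}^*$ with the honest tree over $\mathcal{R}_S$. Since the roots agree, there is a highest level at which the two computations give the same hash output from different inputs, and that pair is a SHA-256 collision. The \texttt{SBPP-LEAF} and \texttt{SBPP-NODE} domain separators together with LP injectivity rule out a leaf being reinterpreted as an internal node, so the differing preimages are genuinely distinct strings. Two edge cases need a check. The first is trees of unequal depth, which are handled by domain separation. The second is the BN128 modular reduction of the root, which I would handle by comparing raw SHA-256 outputs inside the tree, since that reduction is applied only to $cd$.
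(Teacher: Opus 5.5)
Your reduction is correct and is the same as the paper's. $\mathcal{B}$ takes the accepted transcript together with the honestly computed session record and forwards $(D, \mathit{mp}^*, \mathit{root}_S, \mathcal{R}_S)$ to the target-binding challenger, exactly as in the appendix; you simply spell out the honest simulation more explicitly. Your last paragraph derives target-binding from SHA-256 collision resistance by walking the path and using the \texttt{SBPP-LEAF}/\texttt{SBPP-NODE} domain separation; it is sound but not required, since Theorem~\ref{thm:sap} takes Merkle target-binding as a standalone assumption.
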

\begin{proof}
$\mathcal{A}$ outputs $(\pi, \text{pub}, D, \mathit{mp})$ accepted
under $S$ with $D \notin \mathcal{R}_S$. The honest tree built
over $\mathcal{R}_S$ does not contain leaf
$H(\LP(\texttt{SBPP-LEAF}, D))$, yet $\mathit{mp}$ verifies
against $\mathit{root}_S$. $\mathcal{B}$ forwards
$(D, \mathit{mp}, \mathit{root}_S, \mathcal{R}_S)$ to the
Merkle challenger, breaking target-binding.
\end{proof}

\begin{theorem}[Win$_3$ Reduction]
If $\mathcal{A}$ wins Win$_3$, we construct either an EUF-CMA
forger, a SHA-256 collision finder, or a Merkle target-binding
breaker.
\end{theorem}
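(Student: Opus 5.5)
Suppose $\mathcal{A}$ outputs $\tau^*_\text{audit}=(\pi^*,\text{pub}^*,D^*,\mathit{mp}^*,\rho^*)$ with $\text{Audit}(\tau^*_\text{audit})=1$. Suppose also that $\pi^*$ was produced by $\mathcal{O}_\text{Prove}$ under a context $(N',\mathit{root}')$ that differs from the pair $(N_{\rho^*},\mathit{root}_{\rho^*})$ extracted from $\rho^*$ in Audit step~(2). I would split on whether $\rho^*$ was ever output by $\mathcal{O}_\text{Search}$.

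\textbf{Case 1: $\rho^*$ is fresh.} If the signed tuple $(S,N,t_{\exp},\mathit{root},\text{mode},\text{pv},e)$ in $\rho^*$ was never signed by the search oracle, Audit step~(1) still accepts. Then $\mathcal{B}$ is an EUF-CMA forger. It embeds the challenge verification key as the server key and answers every $\mathcal{O}_\text{Search}$ query through the signing oracle. It simulates $\mathcal{O}_\text{Issue}$ and $\mathcal{O}_\text{Prove}$ honestly, since it knows every nonce and root, and it outputs $\rho^*$. This case also covers a client who fabricates $(N,\mathit{root})$, as in Observation~1. The loss is at most the number of sign queries, which is polynomial.

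\textbf{Case 2: $\rho^*$ was issued honestly.} Then $(N_{\rho^*},\mathit{root}_{\rho^*},\text{pv},e)$ is a genuine issued context. Because $\pi^*$ came from $\mathcal{O}_\text{Prove}$, $\mathcal{B}$ knows the honest public input it was produced for, $\text{pub}'[7]=H(\LP(\texttt{SBPP-v1},D',\text{pv}',e',N',\mathit{root}'))$. Audit step~(5) accepts $\pi^*$ over $\text{pub}^*$. I would argue that $\text{pub}^*[7]=\text{pub}'[7]$, or else $\mathcal{A}$ has mauled an honest proof into a proof for another statement; excluding that needs Groth16 soundness in the same role as the Win$_1$ reduction. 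Step~(3) gives $\text{pub}^*[7]=H(\LP(\texttt{SBPP-v1},D^*,\text{pv},e,N_{\rho^*},\mathit{root}_{\rho^*}))$. Since $(N',\mathit{root}')\neq(N_{\rho^*},\mathit{root}_{\rho^*})$, LP injectivity makes the two preimages distinct while their hashes agree. $\mathcal{B}$ outputs them as a SHA-256 collision. Note that $D^*$ need not equal $D'$; LP injectivity still separates the encodings, so no extra case is needed.

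The Merkle target-binding branch is used as a side case inside Case~2. Suppose the digests agree, so the context matches, but $\mathcal{A}$ swaps in a different $D^*$ with $D^*\notin\mathcal{R}$ under $\mathit{root}_{\rho^*}$, and step~(4) still accepts. Then $\mathcal{B}$ forwards $(D^*,\mathit{mp}^*,\mathit{root}_{\rho^*},\mathcal{R})$ exactly as in the Win$_2$ reduction. $\mathcal{B}$ knows $\mathcal{R}$ because it simulated $\mathcal{O}_\text{Search}$. A union bound over the three reductions gives the negligible bound.

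\textbf{Main obstacle.} The hard step is the claim in Case~2 that $\pi^*$ still fixes $\text{pub}'[7]$. Plain Groth16 knowledge soundness does not say this. Groth16 proofs are rerandomizable but not simulation-extractable, so one must argue that a valid proof for a new statement built from honest proofs breaks soundness. I would first phrase the assumption as ``Groth16 soundness fixes $\text{pub}[7]$,'' consistent with the Win$_1$ reduction. If a cleaner route is needed, I would instead state that $\mathcal{O}_\text{Prove}$ outputs are the only source of $\pi^*$ and appeal to weak simulation-extractability of Groth16 in the AGM.
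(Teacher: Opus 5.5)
Your proof is correct and uses the same three reductions as the paper: a forged receipt gives an EUF-CMA forger, and an honest receipt plus digest equality gives a collision via LP injectivity, with the Merkle branch as a side case. The only change is the order of the case split: you split on receipt freshness first rather than on $N'\neq N$ versus $\mathit{root}'\neq\mathit{root}$, so one injectivity argument covers both kinds of context difference, and this makes visible that the Merkle branch (in both your proof and the paper's) never triggers under the Win$_3$ condition. The step you flag as the main obstacle, that $\pi^*$ fixes the honest $\text{pub}'[7]$, is exactly the ``Groth16 soundness fixes $\text{pub}[7]$'' assumption the paper invokes through the Win$_1$ reduction, so you rely on nothing beyond what the paper does.
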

\begin{proof}
$\mathcal{A}$ outputs $\tau^*_\text{audit}$ passing all five Audit
checks, but $\pi$ was generated under $A' \neq A$.

\emph{Case (a): session difference} ($N' \neq N$). Step~(1)
verifies the receipt signature on $\rho$. Either $\rho$ was forged
($\mathcal{B}$ breaks EUF-CMA) or $\rho$ is honest and
authenticates $N$. Step~(3) recomputes $cd'$ with $N$; since $\pi$
was generated with $N'$, acceptance yields a SHA-256 collision (as
in Win$_1$).

\emph{Case (b): result-set difference}
($\mathit{root}' \neq \mathit{root}$). Either $\rho$ is forged
(EUF-CMA) or the digest equality yields a collision. If
$D \notin \mathcal{R}$ under the honest root, step~(4) yields a
Merkle target-binding break (as in Win$_2$).
\end{proof}

\noindent By a union bound,
$\Pr[\text{Win}_1 \lor \text{Win}_2 \lor \text{Win}_3] \leq
\text{Adv}^\text{CR}_\text{SHA-256} +
\text{Adv}^\text{Sound}_\text{Groth16} +
\text{Adv}^\text{Merkle}_\text{target-bind} +
\text{Adv}^\text{EUF-CMA}_\text{Sig}$,
all negligible under standard assumptions. \qed


\end{document}